\newcolumntype{C}[1]{>{\centering\arraybackslash}m{#1}}
\mathchardef\ordinarycolon\mathcode`\:
\newcommand*{\Cdot}[1][2.4]{%
  \mathpalette{\CdotAux{#1}}\cdot%
}
\newdimen\CdotAxis
\newcommand*{\CdotAux}[3]{%
  {%
    \settoheight\CdotAxis{$#2\vcenter{}$}%
    \sbox0{%
      \raisebox\CdotAxis{%
        \scalebox{#1}{%
          \raisebox{-\CdotAxis}{%
            $\mathsurround=0pt #2#3$%
          }%
        }%
      }%
    }%
    \dp0=0pt %
    \sbox2{$#2\bullet$}%
    \ifdim\ht2<\ht0 %
      \ht0=\ht2 %
    \fi
    \sbox2{$\mathsurround=0pt #2#3$}%
    \hbox to \wd2{\hss\usebox{0}\hss}%
  }%
}
\definecolor{dunkelgruen}{rgb}{0,0.4,0}
\DeclareMathOperator*{\argmax}{arg\,max}
\def\R{\mathbb{R}}
\def\cX{\mathcal{X}}
\def\cZ{\mathcal{Z}}
\def\cM{\mathcal{M}}
\def\cN{\mathcal{N}}
\def\cW{\mathcal{W}}
\def\cL{\mathcal{L}}
\def\cI{\mathcal{I}}
\def\MPLE{\mathrm{MPLE}}
\def\NPMLE{\mathrm{NPMLE}}
\def\DSMLE{\text{DS-MLE}}
\def\true{\text{true}}
\def\impl{\Longrightarrow}
\theoremstyle{plain}
\newtheorem{theorem}{Theorem}
\newtheorem{proposition}[theorem]{Proposition}
\newtheorem{remark}[theorem]{Remark}
\theoremstyle{definition}
\begin{document}
\ZTPAuthor{Ilja Klebanov, Alexander Sikorski, Christof Sch\"utte, Susanna R\"oblitz}
\ZTPTitle{Empirical Bayes Methods for Prior Estimation in Systems
Medicine
\footnote{This research was carried out in the framework of {\sc Matheon} supported by Einstein Foundation Berlin, ECMath Project CH6.}
}
\ZTPNumber{16-57}
\ZTPMonth{November}
\ZTPYear{2016}

\title{Empirical Bayes Methods for Prior Estimation in Systems Medicine}
\author{Ilja Klebanov}
\author{Alexander Sikorski}
\author{Christof Sch\"utte}
\author{Susanna R\"oblitz}
\zibtitlepage
\maketitle

\begin{abstract}
One of the main goals of mathematical modeling in systems medicine related to
medical applications is to obtain patient-specific parameterizations and model
predictions.
In clinical practice, however, the number of available measurements for single
patients is usually limited due to time and cost restrictions. This hampers the
process of making patient-specific predictions about the outcome of a
treatment. On the other hand, data are often available for many patients, in
particular if extensive clinical studies have been performed.
Therefore, before applying Bayes' rule \emph{separately} to the data of each
patient (which is typically performed using a non-informative prior), it is meaningful
to use empirical Bayes methods in order to construct an informative prior from
all available data.
We compare the performance of four priors -- a non-informative prior and priors
chosen by nonparametric maximum likelihood estimation (NPMLE), by maximum
penalized likelihood estimation (MPLE) and by doubly-smoothed maximum
likelihood estimation (DS-MLE) -- by applying them to a low-dimensional
parameter estimation problem in a toy model as well as to a high-dimensional
ODE model of the human menstrual cycle, which represents a typical example from
systems biology modeling.
\end{abstract}

\vspace{0.3cm}
\textbf{Keywords:}
Parameter estimation,
Bayesian inference,
Bayesian hierarchical modeling,
hyperparameter,
hyperprior,
principle of maximum entropy,
NPMLE,
MPLE,
DS-MLE,
EM algorithm,
Jeffreys prior,
reference prior

\section{Introduction}
\label{section:Introduction}
The estimation of a parameter $X\in\R^d$ from a measurement $Z\in\R^n$ given a
model $\phi\colon \R^d\to\R^n$ can be an ill-posed problem, especially if $\phi$
is non-injective. In addition, measurements are usually defective and the
measurement error should be taken into account. For these reasons, it is
meaningful to view $X$, $Z$ and the error term $E$ as random variables with
\begin{equation}
\label{equ:fehlerAdditiv}
Z = \phi(X) + E,
\qquad
E\sim\rho_E\ \ \text{independent of $X$},
\end{equation}
where $\rho_E$ denotes the density of the error term $E$. After establishing a
\emph{prior density} $\rho_X(x)$ of $X$, which reflects our initial knowledge
about the parameter, Bayes' rule gives us the proper tool to update our
knowledge when taking into accout the measurement:
\begin{align*}
&\rho_X(x\mid Z=z) = \frac{\rho_E(z-\phi(x))\, \rho_X(x)}{\rho_Z(z)}\, ,
\intertext{where}
&\rho_Z(z)
=
\int_{\R^d}
\rho_E(z-\phi(x))\, \rho_X(x)\, \mathrm dx.
\end{align*}
In contrast to point estimators like the maximum likelihood estimator (MLE)
\[
\hat x_{\rm MLE} = \argmax_x \rho_E(z-\phi(x)),
\]
which usually try to optimize the fit, the result of the Bayesian inference is a
whole \emph{distribution}, the "posterior", in the parameter space $\cX$.
Usually, especially in high dimensions, it is given by a (possibly weighted)
sampling, since the density function is too costly to compute due to the
integral $\rho_Z(z)$ in the denominator. A sampling can be produced without
the knowledge of this denominator by e.g. the Metropolis-Hastings algorithm
\cite{metropolis1953equation,hastings1970monte} and is useful for the
computation of certain expectation values with respect to the posterior.

However, in many applications no comprehensible prior can be assigned, which
results in eliciting priors based on expert opinion and therefore in
different posterior distributions depending on which expert was asked.
This unsatisfactory lack of objectivity and consistency has led to many
controversial discussions about the reasonability and trustability of Bayesian
inference.

Empirical Bayes methods provide one possible solution to this issue by first
estimating the prior distribution. This, naturally, will require further
knowledge, in our case (independent, $\rho_Z$-distributed) measurements
$Z_1=z_1,\dots,Z_M=z_M$ for \emph{several} individuals, which exists
in many statistical trials.

An important application is the prediction of patient-specific treatment
success rates based on clinical measurement data and a mathematical model
describing the underlying physiological processes. One example are hormonal
treatments of the human menstrual cycle, as they are frequently performed in
reproductive medicine. In this case, clinical data are available as well as a
robust mathematical model, which allows to simulate the cyclic behavior under
varying external conditions \cite{Roeblitz2013}.
 
Typically, predictions are required for a specific patient in the daily clinical
practice, where the number of measurements is limited due to time and cost
restrictions.  On the other hand, data are often available for many (hundreds or
even thousands of) patients, in particular if extensive clinical studies have
been performed for the approval of a drug.
Using these population data, we propose an iterative algorithm for constructing
an informative prior distribution, which then serves as the basis for computing
patient-specific posteriors and obtaining individual predictions.

In the empirical Bayes framework, the prior $\Pi=\pi$ is considered as a
hyperparameter with unknown true value $\pi_{\rm true}=\rho_X$. A corresponding
likelihood function $L(\pi)$ can be derived and statistical inference
(frequentist or Bayesian) can be applied for its estimation.
If the prior has no prescribed parametric form, the resulting nonparametric
maximum likelihood estimate (NPMLE)
\begin{equation}
\label{equ:MPLEformulation}
\pi^\NPMLE = \argmax_\pi \ \log L(\pi)
\end{equation}
is given by a discrete distribution
with at most $M$ nodes, see \cite[Theorems 2-5]{laird1978nonparametric} or
\cite[Theorem 21]{lindsay1995mixture}.
In order to avoid this kind of ``peaked'' behavior (overfitting of the MLE) and
to favor ``smooth'' priors, a penalty term $\Phi(\pi)$ is often subtracted from
the marginal log-likelihood. The resulting maximum penalized likelihood
estimate (MPLE),
\[
\pi^\MPLE = \argmax_\pi \ \log L(\pi) - \Phi(\pi),
\]
is a trade-off between goodness of fit and smoothness of the prior.

Following the discussion in \cite{theoretischesPaper}, we will use the negative mutual information
$\cI[X;Z](\pi)$ as our penalty, which is equivalent to using the negative
``$Z$-entropy'' $H_Z(\pi)$ (the entropy in measurement space) in the case of an
additive error \eqref{equ:fehlerAdditiv},
\begin{align}
\begin{split}
\label{equ:Zentropy}
\Phi(\pi)
&=
- \gamma\, H_Z(\pi) := \gamma \int
\rho_Z(z|\Pi=\pi)\, \log \rho_Z(z|\Pi=\pi)\, \mathrm dz
\\
&=
- \gamma\, \cI[X;Z](\pi) + {\rm const.},
\end{split}
\end{align}
where the weight $\gamma$ equilibrates the trade-off between smoothness of the
prior and goodness of fit. The resulting prior estimate $\pi^\MPLE$ has a
beautiful connection to reference priors \cite{berger1992development} -- the two coincide in the case of no measurements. As an alternative
approach, we will apply the doubly-smoothed MLE (DS-MLE) introduced by Seo and
Lindsay in \cite{seo2013universally,seo2010computational} to our problem, which
can be viewed as a regularization in the measurement space before the
application of NPMLE.

This paper has a theoretical counterpart \cite{theoretischesPaper}, where the
theory is explained in detail. Here, we will concentrate on applications.

We will introduce the notation in Section \ref{section:Setup}, set out the
theory and derive the numerical scheme in Section \ref{section:Theory}. The
resulting algorithm for two different scenarios and their implementation are
discussed in Section \ref{section:ResultingAlgorithm}.
Corresponding numerical results for a low-dimensional toy example are presented
in Section \ref{section:ToyExample}, together with an example that shows what
can go wrong.
Finally, in Section \ref{section:LargeExample} the algorithm is applied to a high-dimensional
parameter estimation problem in an ODE model of the human menstrual cycle, which
represents a typical example from systems medicine modeling.
\section{Setup and Notation}
\label{section:Setup}
Throughout this manuscript, we will use the following notation:
\begin{enumerate}[(i)]
 \item The probability density function of a random variable $Y$ will be
 denoted by $\rho_Y$, while $\rho_Y(\Cdot\mid A)$ will stand for its conditional
 density given an event $A$ (typically $A = \{X=x\}$ or $A = \{\Pi=\pi\}$).
 Other probabilities will be denoted by $\pi$ for ``priors'' and $p$ for
 ``posteriors'', in particular,
 \[
 p_\pi^z(x) :=  \frac{\rho_Z(z\mid X=x)\, \pi(x)}{\int \rho_Z(z\mid X=\tilde
 x)\, \pi(\tilde x)\, \mathrm d\tilde x}
 \]
 denotes the posterior density of $X$ given the measurement $Z=z$ and the prior
 $\pi$.
 \item In the case of an additive error \eqref{equ:fehlerAdditiv}, which we
 will always assume, the likelihood model $\{\rho_Z(\Cdot\mid X=x)\mid
 x\in\R^d\}$ is given by
 \[
 \rho_Z(z\mid X=x) := \rho_E\left(z-\phi(x)\right),
 \]
 where $\phi\colon\R^d\to\R^n$ is the (known) underlying model and $\rho_E$ is
 the (known) probability density of the additive error term $E$.
 \item
 Since we assume to have several patients with (independent and 
 $\rho_X$-distributed, but unknown) parametrizations $X_m$ and
 (known) measurements $Z_m$,
 \[
 X_m\stackrel{\rm i.i.d.}{\sim}\pi_{\rm true} = \rho_X,
 \qquad
 Z_m \stackrel{\rm indep.}{\sim}\rho_Z(\Cdot\mid X=x_m),
 \qquad
 m=1,\dots,M,
 \]
 our likelihood model $\{\rho_Z(\Cdot\mid\Pi=\pi)\mid \pi\in\cM_1(\R^d)\}$ for
 the hyperparameter $\Pi=\pi$, where $\cM_1(\R^d)$ denotes the set of all
 probability densities on $\R^d$, is given by
 \begin{equation}
 \label{equ:hyperLikelihood}
 \rho_Z(\Cdot\mid\Pi=\pi) = \int \rho_Z(z\mid X=x)\, \pi(x)\, \mathrm dx,
 \end{equation}
  which is the ``would-be probability density'' of $Z$, if $\pi$ was the true
  prior (so, the true density of $Z$ is given by $\rho_Z = \rho_Z(\Cdot\mid
 \Pi=\pi_{\rm true})$).
 The (marginal) likelihood $L(\pi)$ is then given by
 \[
 L(\pi) = \prod_{m=1}^M \rho_Z(z_m\mid\Pi=\pi).
 \]
 We will call the likelihood model identifiable (see e.g.
 \cite[Section 5.5]{van2000asymptotic}), if
 \begin{equation}
 \label{equ:hyperIdentifiability}
 \rho_Z(\Cdot\mid\Pi=\pi) = \rho_Z(\Cdot\mid\Pi=\pi_{\rm true})
 \ \Longleftrightarrow\
 \pi=\pi_{\rm true}.
 \end{equation} 
\end{enumerate}
We will slightly abuse notation by using \eqref{equ:hyperLikelihood} in the
more general case $\pi\in L^1(\R^d)$ and by utilizing the same notation for
probability densities and the corresponding probability distributions.

\section{Theory}
\label{section:Theory}
In order to reproduce the probability density $\rho_X$ of the parameters
from measurements $z_1,\dots,z_M$, we will recursively apply an approximation
to the fixed point iteration
\vspace{0.4cm}
\begin{center}
\fbox{\parbox{0.85\textwidth}{
\vspace{-0.3cm}
\begin{align}
\label{equ:fixedPointIteration}
\pi_{n+1}(x)
= &
\left(\Psi \pi_n\right)(x),\ \text{ where}
\\
\label{equ:IterationPsi}
\left(\Psi \pi\right)(x)
:= &
\int p_{\pi}^z(x)\, \rho_Z(z)\, \mathrm dz
=
\pi(x) \int \frac{\rho_Z(z\, |\, X=x)}{\rho_{Z}(z\mid\Pi=\pi)}\, \rho_Z(z)\,
\mathrm dz.
\end{align}
\vspace{-0.3cm}
}
}
\end{center}
\vspace{0.4cm}
This iteration is motivated by the observation that the true prior density
$\pi_{\rm true}=\rho_X$ of $X$ is a fixed point of $\Psi$,
\begin{equation}
\label{equ:fixedPoint}
\left(\Psi\pi_{\rm true}\right)(x)
=
\pi_{\rm true}(x) \int \frac{\rho_Z(z\, |\, X=x)}{\rho_{Z}(z)}\, \rho_Z(z)\,
\mathrm dz =
\pi_{\rm true}(x),
\end{equation}
and can be seen as an analogue to the EM algorithm applied to the ``infinite
data'' log-likelihood,
\begin{equation}
\label{equ:ccLogLikelihood}
\cL_{\rm cc}(\pi):=
-H^{\rm cross}\big( \rho_Z,\, \rho_Z(z\mid\Pi=\pi) \big)
=
\int \rho_Z\, \log \rho_Z(z\mid\Pi=\pi)\, \mathrm dz,
\end{equation}
where $H^{\rm cross}$ denotes the cross entropy.
It can be shown that $\cL_{\rm cc}$ is concave in $\pi$ and that
its value is increased in each iteration step (see \cite{theoretischesPaper} ):
\[
\cL_{\rm cc}(\Psi\pi)\ge\cL_{\rm cc}(\pi)\quad \text{for all }\pi\in\cM_1(\R^d).
\]
More precisely, the following statement holds:

\begin{proposition}
\label{prop:equivalenceGoodPi}
Let $\pi\in \cM_1(\R^d)$ be a globally supported probability density
function. Then the following two statements are equivalent:
\begin{enumerate}[(i)]
  \item $\displaystyle \rho_{Z}(\Cdot \mid\Pi=\pi) = \rho_Z$,
  \item $\displaystyle \Psi\pi=\pi$,
  \item $\pi$ maximizes $\cL_{\rm cc}(\pi)$.
\end{enumerate}
\end{proposition}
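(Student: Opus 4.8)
The plan is to prove $(i)\Leftrightarrow(iii)$ directly (no assumption on the support is needed for this) and then close the loop with $(i)\Rightarrow(ii)\Rightarrow(i)$, the global support of $\pi$ entering only in the last implication. For $(i)\Leftrightarrow(iii)$ I would first note that, since $\rho_Z=\rho_Z(\Cdot\mid\Pi=\pi_{\rm true})$, for every $\tilde\pi\in\cM_1(\R^d)$
\[
\cL_{\rm cc}(\tilde\pi)
=
\cL_{\rm cc}(\pi_{\rm true})
-
\int \rho_Z(z)\,\log\frac{\rho_Z(z)}{\rho_Z(z\mid\Pi=\tilde\pi)}\,\mathrm dz ,
\]
where the subtracted integral is the (nonnegative) Kullback--Leibler divergence between the probability densities $\rho_Z$ and $\rho_Z(\Cdot\mid\Pi=\tilde\pi)$, and vanishes precisely when the two densities coincide (Gibbs' inequality, i.e.\ Jensen applied to $-\log$). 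Hence $\pi_{\rm true}$ maximizes $\cL_{\rm cc}$, and an arbitrary $\tilde\pi$ is a maximizer if and only if $\rho_Z(\Cdot\mid\Pi=\tilde\pi)=\rho_Z$; taking $\tilde\pi=\pi$ gives $(iii)\Leftrightarrow(i)$.

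The implication $(i)\Rightarrow(ii)$ is then immediate: substituting $\rho_Z(z\mid\Pi=\pi)=\rho_Z(z)$ into \eqref{equ:IterationPsi} collapses the $z$-integral to $\int\rho_Z(z\mid X=x)\,\mathrm dz=1$, so $\Psi\pi=\pi$, exactly as in \eqref{equ:fixedPoint}. The converse $(ii)\Rightarrow(i)$ is the heart of the matter, and here I would use global support. Since $\pi$ is globally supported, $\pi(x)>0$ for a.e.\ $x$, so from $\Psi\pi=\pi$ one may divide \eqref{equ:IterationPsi} by $\pi(x)$ to obtain, for a.e.\ $x$,
\[
\int \frac{\rho_Z(z\mid X=x)}{\rho_Z(z\mid\Pi=\pi)}\,\rho_Z(z)\,\mathrm dz = 1 .
\]
Multiplying by $\pi_{\rm true}(x)$, integrating in $x$, interchanging the order of integration, and using $\int\rho_Z(z\mid X=x)\,\pi_{\rm true}(x)\,\mathrm dx=\rho_Z(z)$ then yields $\int \rho_Z(z)^2/\rho_Z(z\mid\Pi=\pi)\,\mathrm dz=1$. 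Combined with $\int\rho_Z\,\mathrm dz=\int\rho_Z(\Cdot\mid\Pi=\pi)\,\mathrm dz=1$, the nonnegative $\chi^2$-type integral satisfies
\[
\int \frac{\big(\rho_Z(z)-\rho_Z(z\mid\Pi=\pi)\big)^2}{\rho_Z(z\mid\Pi=\pi)}\,\mathrm dz
= 1 - 2 + 1 = 0 ,
\]
forcing $\rho_Z(\Cdot\mid\Pi=\pi)=\rho_Z$, i.e.\ $(i)$; equivalently, this is the equality case of Cauchy--Schwarz, $1=\big(\int\rho_Z\,\mathrm dz\big)^2\le\int\frac{\rho_Z^2}{\rho_Z(\Cdot\mid\Pi=\pi)}\,\mathrm dz\cdot\int\rho_Z(\Cdot\mid\Pi=\pi)\,\mathrm dz$.

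I expect $(ii)\Rightarrow(i)$ to be the only genuine obstacle. The hypothesis that $\pi$ is globally supported must be used essentially --- a prior supported on a proper subset of $\R^d$ can be a spurious fixed point of $\Psi$ on its own support, so the statement fails without it --- and one has to justify the Fubini interchange and the finiteness of the divergences above (the identity $\int\rho_Z^2/\rho_Z(\Cdot\mid\Pi=\pi)\,\mathrm dz=1$ is convenient here, since it already certifies $\rho_Z\ll\rho_Z(\Cdot\mid\Pi=\pi)$); a similar integrability caveat attaches to the cross-entropy decomposition used for $(i)\Leftrightarrow(iii)$ when the differential entropy of $\rho_Z$ is infinite. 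The remaining two implications are routine: a one-line substitution into \eqref{equ:IterationPsi} and the standard variational characterization of cross entropy.
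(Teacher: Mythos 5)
Your proposal is correct and follows essentially the same route as the paper: the implication (i)$\Rightarrow$(ii) by direct substitution, the equivalence (i)$\Leftrightarrow$(iii) via the Kullback--Leibler/cross-entropy characterization, and (ii)$\Rightarrow$(i) by dividing out $\pi$ (using global support), integrating against $\pi_{\rm true}$, interchanging integrals, and concluding that the $\chi^2$-type quantity $\int\bigl(\rho_Z-\rho_Z(\Cdot\mid\Pi=\pi)\bigr)^2/\rho_Z(\Cdot\mid\Pi=\pi)\,\mathrm dz$ vanishes --- which is exactly the paper's weighted $L^2$ inner product $\langle\rho_{Z,\pi}-\rho_Z,\rho_{Z,\pi}-\rho_Z\rangle_\pi=0$. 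Your explicit attention to the Fubini interchange and integrability caveats is a welcome addition that the paper leaves implicit.
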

\begin{proof}
The proof for (i)$\Rightarrow$(ii) goes analogously to \eqref{equ:fixedPoint}.
For (ii)$\Rightarrow$(i) we will use the abbreviation
\[
\rho_{Z,f} := \rho_Z(\Cdot\mid\Pi=f),\qquad f\in L^1(\R^d).
\]
We define the subspace
\[
\mathcal E = \left\{\rho_{Z,f}\mid f\in L^1(\R^d)\right\}\subseteq L^1(\R^n)
\]
with weighted $L^2$ inner product
\[
\left\langle\rho_{Z,f_1}, \rho_{Z,f_2}\right\rangle_{\pi}
:=
\int_{\R^n}\frac{\rho_{Z,f_1}(z)\,
\rho_{Z,f_2}(z)}{\rho_{Z,\pi}(z)}\, \mathrm dz.
\]
We can formulate the following chain of implications:
\begin{align*}
\text{(ii)}
&\ \impl\
\forall x:\quad
\int \frac{\rho_Z(z)}{\rho_{Z,\pi}(z)}\, \rho_Z(z\, |\, X=x)\,
\mathrm dz = 1
\\
&\ \impl\
\forall x:\quad
\int \left(1-\frac{\rho_Z(z)}{\rho_{Z,\pi}(z)}\right) \rho_Z(z\, |\,
X=x)\, \mathrm dz = 0
\\
&\ \impl\
\int (\pi-\rho_X)(x)\int
\frac{\rho_{Z,\pi}(z)-\rho_Z(z)}{\rho_{Z,\pi}(z)}\, \rho_Z(z\,
|\, X=x)\, \mathrm dz\, \mathrm dx = 0
\\
&\ \impl\
\int
\frac{\rho_{Z,\pi}(z)-\rho_Z(z)}{\rho_{Z,\pi}(z)}\,
\rho_{Z,(\pi-\rho_X)}(z)\, \mathrm dz = 0
\\
&\ \impl\
\left\langle \rho_{Z,\pi} -\rho_Z,
\rho_{Z,\pi}-\rho_Z\right\rangle_{\pi} = 0,
\end{align*}
which implies (i) by the positive definiteness of the inner product.

The equivalence of (i) and (iii) is given by the fact that the cross entropy of
two densities is minimal if and only if the two densities agree.
\end{proof}

So, in the identifiable case \eqref{equ:hyperIdentifiability}, $\pi_{\rm true}$
is the only fixed point of the iteration \eqref{equ:fixedPointIteration} and one
can prove convergence. 

As discussed in \cite{theoretischesPaper},
Proposition \ref{prop:equivalenceGoodPi} suggests yet another estimation
approach for the density $\pi_{\rm true} = \rho_X$:
Compute an approximation $\rho_Z^{\rm appr}$ of the density $\rho_Z$ using the
measurements $Z_1,\dots,Z_M$ and then minimize the cross entropy between
$\rho_Z^{\rm appr}$ and $\rho_Z(z\mid\Pi=\pi)$:
\[
\pi_\ast = \argmax_\pi \cL_{\rm cc}^{\rm appr}(\pi),
\qquad
\cL_{\rm cc}^{\rm appr}(\pi)
:=
-H^{\rm cross}\big( \rho_Z^{\rm appr},\, \rho_Z(z\mid\Pi=\pi) \big).
\]
If the approximation of $\rho_Z$ is performed by kernel density estimation, the
resulting method is the so-called doubly-smoothed maximum likelihood estimation
(DS-MLE) introduced by Seo and Lindsay in
\cite{seo2013universally,seo2010computational}. Note that, in this case, due to
the additional smoothing by the kernel, the likelihood model has to be smoothed
as well in order to get consistent results. We will denote the resulting
density estimate $\pi_\ast$ by $\pi_\DSMLE$.

\subsection{Numerical realization}
\label{section:TheoryNumerical}
The numerical approximation of the fixed point iteration
\eqref{equ:fixedPointIteration}--\ref{equ:IterationPsi} will be realized by two
discretization steps:
\begin{enumerate}[(i)]
  \item The first discretization in $\R^n$ is due to the fact that we have
  only finitely many $\rho_Z$-distributed measurements $Z_1,\dots,Z_M$ instead
  of the density $\rho_Z$ appearing in \eqref{equ:IterationPsi}. We will
  use the following Monte-Carlo approximation:
  \begin{equation}
  \label{equ:MCinZ}
  \tag{Z-MC}
  \rho_Z\approx \frac{1}{M}\sum_{m=1}^M\delta_{z_m}.
  \end{equation}
  \item In order to compute the high-dimensional integrals
  $\rho_Z(z_m\mid\Pi=\pi)$, a second discretization in $\R^d$ is necessary,
  which will be realized by another Monte-Carlo approximation for prior densities
  $\pi$ (and the hyperparameter $\Pi=\pi$ will be replaced by $W=w$):
  \begin{small}
  \begin{equation}
  \label{equ:MCinX}
  \tag{X-MC}
  \pi\approx \sum_{k=1}^K w_k\delta_{x_k},
  \quad  
  x_k\in\R^d,\ w\in\cW :=
  \Big\{w\in\R^K\mid w_k\ge 0\, \forall k,\,
  \sum_{k=1}^K w_k = 1\Big\}.
  \end{equation}
  \end{small}
\end{enumerate}
The application of these discretizations to the infinite data log-likelihood
$\cL_{\rm cc}$ defined by \eqref{equ:ccLogLikelihood} and to the fixed point
iteration $\Psi_{\rm cc}:=\Psi$ defined by \eqref{equ:fixedPointIteration} --
\eqref{equ:IterationPsi} results in the commutative diagram displayed in Figure
\ref{fig:commutativeDiagram}, which is discussed in detail in \cite{theoretischesPaper}.

\begin{figure} 
\begin{center}
\begin{tikzpicture}[description/.style={fill=white,inner sep=2pt}]
\matrix (m) [matrix of math nodes, row sep=2.7em,
column sep=1.5em, text height=1.5ex, text depth=0.25ex]
{
& \boxed{\cL_{\rm cc}} &&&&& \boxed{\Psi_{\rm cc}} && \\
&&& \boxed{\cL_{\rm dc}} &&&&& \boxed{\Psi_{\rm dc}} \\
 \boxed{\cL_{\rm cd}} &&&&& \boxed{\Psi_{\rm cd}} &&& \\
&& \boxed{\cL_{\rm dd}} &&&&& \boxed{\Psi_{\rm dd}} & \\
};

\path[->,line width=1pt, draw=blue]
(m-1-7)	edge node [right] {\footnotesize $\ \ \textcolor{blue}{\text{(X-MC)}}$}
(m-2-9) edge node [above=0.4cm] {\footnotesize $\ \
\textcolor{blue}{\text{(Z-MC)}}\qquad$} (m-3-6)

(m-2-9)	edge node [right]
{\footnotesize $\textcolor{blue}{\text{(Z-MC)}}$} (m-4-8) (m-3-6)	edge node [above]
{\footnotesize $\ \ \textcolor{blue}{\text{(X-MC)}}$} (m-4-8)

(m-1-2)	edge node [right] {\footnotesize $\ \ \textcolor{blue}{\text{(X-MC)}}$}
(m-2-4) edge node [above left = -0.2cm] {\footnotesize
$\textcolor{blue}{\text{(Z-MC)}}$} (m-3-1) edge [line width=1pt,
draw=red] node [above] {\footnotesize\textcolor{red}{EM}}(m-1-7)
(m-3-1)	edge node [below left = -0.2cm] {\footnotesize
$\textcolor{blue}{\text{(X-MC)}}$} (m-4-3) edge [line width=1pt, draw=red] (m-3-6)
(m-2-4)	edge node [below = 0.4cm] {\footnotesize
$\textcolor{blue}{\quad\ \text{(Z-MC)}}$} (m-4-3) edge [line width=1pt,
draw=red] (m-2-9) (m-4-3)	edge [line width=1pt, draw=red] node [below] {\footnotesize $\textcolor{red}{\mathrm{EM}}$} (m-4-8) ;

\draw  (3,0.5) node {\footnotesize $\textcolor{red}{\mathrm{EM}}$};
\draw  (-3,-0.5) node {\footnotesize $\textcolor{red}{\mathrm{EM}}$};

\end{tikzpicture}
\caption{The relations between the log-likelihoods $\cL$ and the fixed point
iterations $\Psi$ resulting from the application of the EM algorithm summarized
by a commutative diagram.}
\label{fig:commutativeDiagram}
\end{center}
\end{figure}
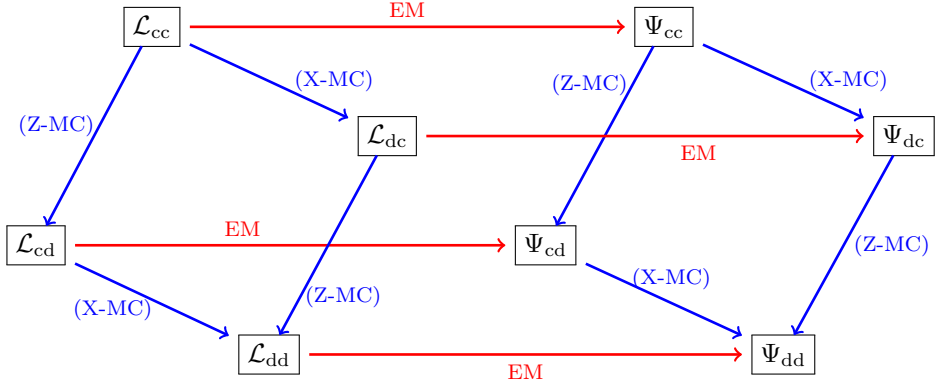

Here, the indices c and d denote whether the parameter space (first index) and
the measurement space (second index) are considered continuous or discretized in
the above sense. The corresponding log-likelihoods and fixed point iterations
are given by
\begin{small}
\begin{align*}
&
\cL_{\rm cc}(\pi)
=
\int_\cZ \rho_Z(z) \log\rho_{Z}(z|\Pi=\pi)\, \mathrm dz,
&&&
&\Psi_{\rm cc} \pi\,  (x)
=
\pi(x)\int_\cZ \rho_Z(z)\, \frac{\rho_Z(z|X=x)}{\rho_{Z}(z|\Pi=\pi)}\, \mathrm
dz,&
\\[0.1cm]
&
\cL_{\rm cd}(\pi)
= \frac{1}{M}
\sum_{m=1}^M \log \rho_{Z}(z_m|\Pi=\pi),
&&&
&\Psi_{\rm cd} \pi\,  (x)
=
\frac{\pi(x)}{M}\sum_{m=1}^M
\frac{\rho_Z(z_m|X=x)}{\rho_{Z}(z_m|\Pi=\pi)}\, ,&
\\[0.1cm]
&
\cL_{\rm dc}(w)
=
\int_\cZ \rho_Z(z) \log\rho_{Z}(z|W=w)\, \mathrm dz,
&&&
&\left[\Psi_{\rm dc} w\right]_k
=
w_k\int_\cZ \rho_Z(z)\, \frac{\rho_Z(z|X=x_k)}{\rho_{Z}(z|W=w)}\, \mathrm dz,&
\\[0.1cm]
&
\cL_{\rm dd}(w)
=
\frac{1}{M}\sum_{m=1}^M \log \rho_{Z}(z_m|W=w),
&&&
&\left[\Psi_{\rm dd} w\right]_k
=
\frac{w_k}{M}\sum_{m=1}^M
\frac{\rho_Z(z_m|X=x_k)}{\rho_{Z}(z_m|W=w)}\, .&
\end{align*}
\end{small}

\begin{remark}
Note that $\cL_{\rm cd}$ and $\cL_{\rm dd}$ are the log-likelihood functions of
the hyperparameter $\Pi=\pi$, $W=w$ respectively (divided by $M$). The fixed
point iterations $\Psi_{\rm cd}$ and $\Psi_{\rm dd}$ resulting from the
application of the EM algorithm have been studied before, see e.g.
\cite{turnbull1976empirical,laird1978nonparametric}. They are often referred to
as the ``self-consistency algorithm'' since they fulfill the self-consistency
principle introduced by Efron \cite{efron1967two}.
Therefore, the stepwise increase of $\cL_{\rm cd}$ and $\cL_{\rm dd}$ when
applying the iterations $\Psi_{\rm cd}$ and $\Psi_{\rm dd}$, respectively,
follows from the theory on the EM algorithm \cite{dempster1977maximum}.
The proof of an analogous statement for $\Psi_{\rm cc}$ and $\Psi_{\rm dc}$ is
given in \cite{theoretischesPaper}.
\end{remark}

In order to guarantee that the nodes $x_1,\dots,x_K$ from the discretization
\eqref{equ:MCinX} lie in an area of high probability of the iterates
$\pi_n$, they will be chosen $\pi_1$-distributed, since our experiments
have shown that $\pi_1$ already gives a considerable improvement over $\pi_0$
in approximating $\pi_{\rm true}$.
This will be realized by producing a $p_{\pi_0}^{z_m}$-distributed sampling
$(x_1^{(m)},\dots,x_L^{(m)})$ for each $m=1,\dots,M$ by the Metropolis-Hastings
algorithm and merging these to get a $\pi_1$-distributed sampling
\[
\mathcal X = \{x_1,\dots,x_K\} = \bigcup_{m=1}^M
\left\{x_1^{(m)},\dots,x_L^{(m)}\right\}
\]
with weights $w_{1,k} = 1/K$, $k=1,\dots,K$.

The numerical realization of DS-MLE goes analogously to NPMLE but with an
augmented set of measurements $\zeta_1,\dots,\zeta_{\tilde M}$ instead of
$Z_1,\dots,Z_M$, which are samples from the kernel density estimate $\rho_Z^{\rm
appr}$, see the description in \cite{theoretischesPaper} or \cite{seo2010computational}.

Unfortunately, even though $\cL_{\rm cd}$ is an approximation of $\cL_{\rm cc}$
and $\cL_{\rm cc}$ is maximized by the true prior $\pi_{\rm true}$  of
$\cL_{\rm cc}$, $\cL_{\rm cd}$ can be shown to be maximized by a discrete
distribution with at most $M$ nodes, see \cite[Theorems 2-5]{laird1978nonparametric} or \cite[Theorem
21]{lindsay1995mixture}.
This originates from the typical ``overconfidence'' (or ``overfitting'') of maximum
likelihood estimators and results in a poor approximation of the true prior:
\[
\pi_{\rm cd,\infty} = \argmax_\pi\ \cL_{\rm cd}(\pi)\, \napprox\, \argmax_\pi\ 
\cL_{\rm cc}(\pi) = \pi_{\rm true}.
\]
As discussed in the introduction, one way to deal with this issue is to
subtract a penalty term from $\cL_{\rm cd}$ before maximizing it, which we will
choose to be the negative entropy $-H_Z(\pi)$ in the measurement space defined
by \eqref{equ:Zentropy}:
\begin{equation}
\label{equ:piMPLE}
\pi^\MPLE = \argmax_\pi \ M\cL_{\rm cd}(\pi) + \gamma H_Z(\pi).
\end{equation}
The numerical implementation of \eqref{equ:piMPLE} will be realized by a gradient
ascent of the discretized functional
\begin{align*}
\mathcal A
&:=
M\cL_{\rm dd}(w) + \gamma H_Z(w),
\intertext{where}
H_Z(w)
&:=
-\int \rho_Z(z\mid W=w)\, \log \rho_Z(z\mid W=w)\, \mathrm dz.
\end{align*}
In order for the iterates $w_n$ to stay in the simplex $\cW$ as defined in
\eqref{equ:MCinX}, an additional bestapproximation is necessary in each step.

The integral appearing in the gradient of the functional $\mathcal A$,
\[
\nabla\mathcal A
=
\left(
\sum_{m=1}^M \frac{\rho_Z(z_m\mid X = x_k)}{\rho_Z(z_m\mid W=w)}
- \gamma\underbrace{\int\rho_Z(z\mid X=x_k)\, \log(\rho_Z(z\mid W=w))\, \mathrm
dz}_{=:\, I} -\gamma \right)_{k=1}^K,
\]
is computed using an importance sampling,
\begin{align*}
I
&=
\int\frac{\rho_Z(z\mid X=x_k)}{\rho_Z(z\mid W=w)}\, \log(\rho_Z(z\mid W=w))\,
\rho_Z(z\mid W=w)\, \mathrm dz
\\
&\approx
\sum_{j=1}^J \frac{\rho_Z(z_j\mid X=x_k)}{\rho_Z(z\mid W=w)}\,
\log(\rho_Z(z_j\mid W=w)),
\end{align*}
where the points $z_j$ are chosen $\rho_Z(\Cdot\mid W=w)$-distributed.

\subsection{Non-identifiable case}
\label{section:nonidentifiable}
If the identifiability assumption \eqref{equ:hyperIdentifiability} is not
fulfilled, we cannot expect the fixed point iteration
\eqref{equ:fixedPointIteration}--\eqref{equ:IterationPsi} to converge to the
true prior $\pi_{\rm true} = \rho_X$. The best we can hope for is to get
close to a prior $\pi$ such that $\rho_Z(\Cdot\mid\Pi=\pi) = \rho_Z$.
Therefore, one way to enforce convergence is to restrict ourselves to
equivalence classes of densities with respect to the equivalence relation
\[
\pi\sim\pi'
\ \Longleftrightarrow\
\left\| \rho_Z(\Cdot\mid\Pi=\pi) - \rho_Z(\Cdot\mid\Pi=\pi')\right\|_{L^1(\R^n)}=0.
\]
Note that the set  of equivalence classes $L^1(\R^d)/{\sim}$ is, in fact, the quotient space $L^1(\R^d)/\mathrm{ker}(\psi)$ emerging
from the linear map
\[
\psi\colon L^1(\R^d)\to L^1(\R^n),
\qquad
\pi\mapsto \rho_Z(\Cdot\mid\Pi=\pi) = \int \rho_Z(\Cdot\mid X=x)\,
\pi(x)\, \mathrm dx.
\]
Therefore, $L^1(\R^d)/{\sim}$ inherits the $L^1$-norm and $L^1$-distance via
\[
\left\|[\pi]\right\|_{L^1}
=
\inf_{\pi'\in [\pi]} \left\|\pi'\right\|_{L^1},
\qquad
\left\|[\pi_1] - [\pi_2]\right\|_{L^1}
=
\inf_{\pi_1'\in [\pi_1]\atop \pi_2'\in [\pi_2]}
\left\|\pi_1'-\pi_2'\right\|_{L^1}
\]
and we can choose from the following two definitions for the convergence of
$\pi_n$ to $\pi_{\infty}$:
\begin{align*}
&\pi_n\xrightarrow[Z]{n\to\infty}\pi_{\infty}
\ :\Longleftrightarrow\
\left\|[\pi_n] -
[\pi_\infty]\right\|_{L^1}\xrightarrow{n\to\infty}0
\intertext{or}
&\pi_n\xrightarrow[Z]{n\to\infty}\pi_{\infty}
\ :\Longleftrightarrow\
\left\|\rho_Z(\Cdot\mid\Pi=\pi_n) -
\rho_Z(\Cdot\mid\Pi=\pi_\infty)\right\|_{L^1}\xrightarrow{n\to\infty}0.
\end{align*}
Both definitions are meaningful but the second definition yields a weaker form
of convergence as stated by the following proposition:
\begin{proposition}
Let $\pi\in L^1(\R^d)$. Then, in the above notation,
\[
\left\|\rho_Z(\Cdot\mid\Pi=\pi)\right\|_{L^1}
\le
\left\|[\pi]\right\|_{L^1}\, .
\]
\end{proposition}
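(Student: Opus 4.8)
The plan is to reduce the statement to the single observation that the linear map $\psi$ introduced just above is an $L^1$-contraction, and then to exploit that $\psi$ is constant on the coset $[\pi]$. First I would record the contraction property: for any $f\in L^1(\R^d)$, writing $\psi f(z)=\int_{\R^d}\rho_E(z-\phi(x))\,f(x)\,\mathrm dx$ and using Tonelli's theorem together with the fact that $\rho_E$ is a probability density and Lebesgue measure is translation-invariant, one gets
\[
\|\psi f\|_{L^1(\R^n)}
=\int_{\R^n}\Bigl|\int_{\R^d}\rho_E(z-\phi(x))\,f(x)\,\mathrm dx\Bigr|\,\mathrm dz
\le\int_{\R^d}|f(x)|\Bigl(\int_{\R^n}\rho_E(z-\phi(x))\,\mathrm dz\Bigr)\mathrm dx
=\|f\|_{L^1(\R^d)}.
\]
In particular $\psi$ maps $L^1(\R^d)$ into $L^1(\R^n)$ with operator norm at most $1$, so $\rho_Z(\Cdot\mid\Pi=\pi)=\psi\pi$ is indeed an element of $L^1(\R^n)$ and $\|\rho_Z(\Cdot\mid\Pi=\pi)\|_{L^1}\le\|\pi\|_{L^1}$.

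Next I would sharpen this from $\|\pi\|_{L^1}$ to the quotient norm $\|[\pi]\|_{L^1}$. By definition of the equivalence relation, $[\pi]=\pi+\ker(\psi)$, hence $\psi\pi'=\psi\pi$ for every representative $\pi'\in[\pi]$. Applying the contraction bound to each such $\pi'$ gives
\[
\|\rho_Z(\Cdot\mid\Pi=\pi)\|_{L^1}
=\|\psi\pi'\|_{L^1}
\le\|\pi'\|_{L^1}\qquad\text{for all }\pi'\in[\pi],
\]
and taking the infimum over $\pi'\in[\pi]$ yields $\|\rho_Z(\Cdot\mid\Pi=\pi)\|_{L^1}\le\inf_{\pi'\in[\pi]}\|\pi'\|_{L^1}=\|[\pi]\|_{L^1}$, which is the claim.

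I do not expect a genuine obstacle here; the argument is essentially a one-line consequence of ``$\psi$ is a contraction that factors through the quotient by its kernel.'' The only points requiring a modicum of care are the measure-theoretic ones in the first step: checking that Tonelli applies (the integrand $(x,z)\mapsto\rho_E(z-\phi(x))|f(x)|$ is nonnegative and jointly measurable, the latter needing $\phi$ measurable, which is part of the standing assumptions), and that $\int_{\R^n}\rho_E(z-\phi(x))\,\mathrm dz=\int_{\R^n}\rho_E(z)\,\mathrm dz=1$ for every $x$. Everything after that is purely formal.
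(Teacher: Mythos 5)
Your proof is correct and follows essentially the same route as the paper's: both establish that $\psi$ is an $L^1$-contraction via Tonelli and the normalization of $\rho_E$, apply this to an arbitrary representative $\pi'\in[\pi]$ (using that $\psi$ is constant on the coset), and take the infimum. Your version merely spells out the measure-theoretic details and the final infimum step, which the paper leaves implicit.
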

\begin{proof}
For each $\pi'\in[\pi]$ we have
\begin{align*}
\left\|\rho_Z(\Cdot\mid\Pi=\pi)\right\|_{L^1}
&\ =\
\big\|\rho_Z(\Cdot\mid\Pi=\pi')\big\|_{L^1}
\ =\
\int\left| \int \rho_Z(z\, |\, X=x)\, \pi'(x)\, \mathrm dx\, \right|
\mathrm dz
\\
&\ \le\
\int \left|\pi'(x)\right|
\underbrace{\int \rho_Z(z\, |\, X=x)\, \mathrm dz}_{=1}\, \mathrm dx
\ =\
\left\|\pi'\right\|_{L^1}.
\end{align*}
\end{proof}

\section{Resulting Algorithm}
\label{section:ResultingAlgorithm}
Given the data $\mathcal Z=\{z_1,\dots,z_M\}\subseteq \R^n$ of $M$ patients, we
will discuss two scenarios:
\begin{enumerate}[(A)]
  \item No diagnoses have been made.
  \item The patients have been diagnosed with diseases/sicknesses
  $s_1,\dots,s_L$ (for simplicity, we will assume that each patient has exactly
  one disease), resulting in a partition of the data set, where
  $Z^{(l)}$ denotes the data of the patients with disease $s_l$:
  \[
  \mathcal Z = \{z_1,\dots,z_M\} = \bigsqcup_{l=1}^L \mathcal Z^{(l)},
  \qquad
  \mathcal Z^{(l)}=\{z_1^{(l)},\dots,z_{M_l}^{(l)}\}
  \qquad
  (M = \sum_{l=1}^L M_l).
  \]   
\end{enumerate}

\textbf{Approach for scenario (A):}
\begin{itemize}
  \item
  Starting with a non-informative prior $\pi_0$, we construct informative priors
  $\pi_\NPMLE$, $\pi_\DSMLE$ and $\pi_\MPLE$ by the fixed point
  iterations discussed in Section \ref{section:TheoryNumerical}.    
  All four priors, $\pi = \pi_0$, $\pi_{\DSMLE}$, $\pi_{\NPMLE}$, $\pi_\MPLE$,
  will be given by the same sampling $\mathcal X = \{x_1,\dots,x_K\}$ but with
  weights $w = w_{0}$, $w_\NPMLE$, $w_\DSMLE$, $w_\MPLE$ and their performance
  will be compared in the next steps.
  \item
  Given a patient with data $Z = z_{\ast}$, we compute the individual
  posteriors $p_{\pi}^{z_\ast}$ with respect to these priors, which will be
  given by new (individual) weights,
  \[
  v_k^{\ast} = \frac{w_k\, \rho_Z(z^\ast\mid X = x_k)}{\sum_{j=1}^K
  w_j\, \rho_Z(z^\ast\mid X=x_j)}\, .
  \]
  \item
  The success rate $R_\ast$ can now be approximated via
   \begin{align*}
	R_\ast
&=
\int r(x)\, p_{\pi}^{z_\ast}(x)\, \mathrm dx
\approx
\frac{1}{K}\sum_{k=1}^K v_k^{\ast}\, r(x_k),
\intertext{where}
r(x)
&=
\begin{cases}
1 &\text{if the treatment, given the parameters $x$, is successful,}\\
0 &\text{otherwise.}
\end{cases}
\end{align*}
\end{itemize}

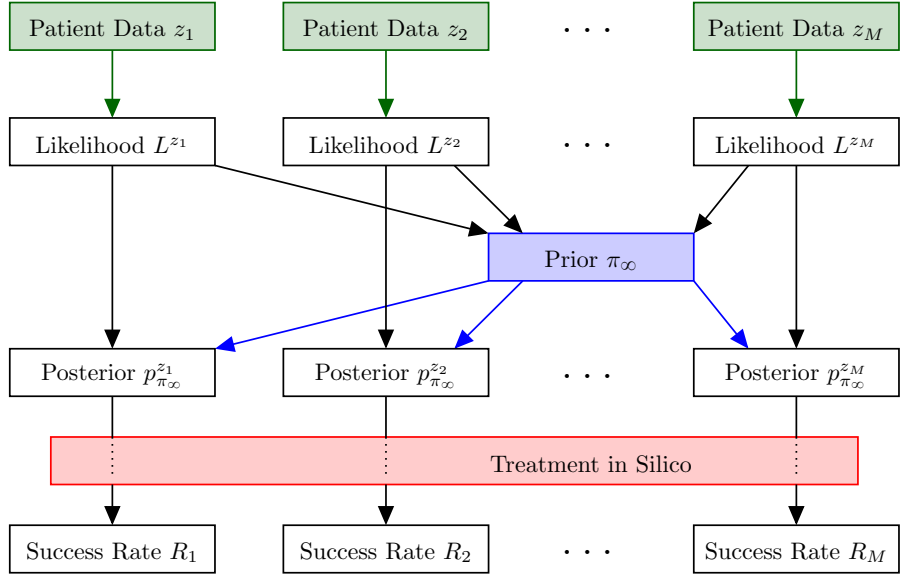
\begin{figure}
\begin{center}
\begin{tikzpicture}[scale=0.9, transform shape]
\pgfsetlinewidth{0.15ex}

\fill[blue!20] (0.5,0.7) rectangle (3.5,1.4);
\draw[blue] (0.5,0.7) rectangle (3.5,1.4);
\draw node at (2,1) {\textcolor{black}{Prior $\pi_\infty$}};

\draw[blue,arrows = {-triangle 45}] (0.5,0.7) -- (-3.5,-0.3);
\draw[blue,arrows = {-triangle 45}] (1,0.7) -- (0,-0.3);
\draw[blue,arrows = {-triangle 45}] (3.5,0.7) -- (4.3,-0.3);

\fill[dunkelgruen!20] (-6.5,4.1) rectangle (-3.5,4.8);
\draw[dunkelgruen] (-6.5,4.1) rectangle (-3.5,4.8);
\draw node at (-5,4.4) {\textcolor{black}{Patient Data $z_1$}};

\fill[dunkelgruen!20] (-2.5,4.1) rectangle (0.5,4.8);
\draw[dunkelgruen] (-2.5,4.1) rectangle (0.5,4.8);
\draw node at (-1,4.4) {\textcolor{black}{Patient Data $z_2$}};

\draw node at (2,4.4) {\textcolor{black}{\huge\ldots}};

\fill[dunkelgruen!20] (6.5,4.1) rectangle (3.5,4.8);
\draw[dunkelgruen] (6.5,4.1) rectangle (3.5,4.8);
\draw node at (5,4.4) {\textcolor{black}{Patient Data $z_M$}};

\draw[dunkelgruen,arrows = {-triangle 45}] (-5,4.1) -- (-5,3.1);
\draw[dunkelgruen,arrows = {-triangle 45}] (-1,4.1) -- (-1,3.1);
\draw[dunkelgruen,arrows = {-triangle 45}] (5,4.1) -- (5,3.1);

\draw (-6.5,2.4) rectangle (-3.5,3.1);
\draw node at (-5,2.7) {Likelihood $L^{z_1}$};

\draw (-2.5,2.4) rectangle (0.5,3.1);
\draw node at (-1,2.7) {Likelihood $L^{z_2}$};

\draw node at (2,2.7) {\huge\ldots};

\draw (6.5,2.4) rectangle (3.5,3.1);
\draw node at (5,2.7) {Likelihood $L^{z_M}$};

\draw[arrows = {-triangle 45}] (-5,2.4) -- (-5,-0.3);
\draw[arrows = {-triangle 45}] (-1,2.4) -- (-1,-0.3);
\draw[arrows = {-triangle 45}] (5,2.4) -- (5,-0.3);

\draw[arrows = {-triangle 45}] (-3.5,2.4) -- (0.5,1.4);
\draw[arrows = {-triangle 45}] (0,2.4) -- (1,1.4);
\draw[arrows = {-triangle 45}] (4.3,2.4) -- (3.5,1.4);

\draw (-6.5,-1) rectangle (-3.5,-0.3);
\draw node at (-5,-0.7) {Posterior $p_{\pi_\infty}^{z_1}$};

\draw (-2.5,-1) rectangle (0.5,-0.3);
\draw node at (-1,-0.7) {Posterior $p_{\pi_\infty}^{z_2}$};

\draw node at (2,-0.7) {\huge\ldots};

\draw (6.5,-1) rectangle (3.5,-0.3);
\draw node at (5,-0.7) {Posterior $p_{\pi_\infty}^{z_M}$};

\fill[red!20] (-5.9,-2.3) rectangle (5.9,-1.6);
\draw[red] (-5.9,-2.3) rectangle (5.9,-1.6);
\draw node at (2,-2) {Treatment in Silico};

\draw (-5,-1) -- (-5,-1.6);
\draw[dotted] (-5,-1.6) -- (-5,-2.2);
\draw[arrows = {-triangle 45}] (-5,-2.3) -- (-5,-2.9);
\draw (-1,-1) -- (-1,-1.6);
\draw[dotted] (-1,-1.6) -- (-1,-2.2);
\draw[arrows = {-triangle 45}] (-1,-2.3) -- (-1,-2.9);
\draw (5,-1) -- (5,-1.6);
\draw[dotted] (5,-1.6) -- (5,-2.2);
\draw[arrows = {-triangle 45}] (5,-2.3) -- (5,-2.9);

\draw (-6.5,-3.6) rectangle (-3.5,-2.9);
\draw node at (-5,-3.3) {Success Rate $R_1$};

\draw (-2.5,-3.6) rectangle (0.5,-2.9);
\draw node at (-1,-3.3) {Success Rate $R_2$};

\draw node at (2,-3.3) {\huge\ldots};

\draw (6.5,-3.6) rectangle (3.5,-2.9);
\draw node at (5,-3.3) {Success Rate $R_M$};

\end{tikzpicture}

\caption{Algorithmic scheme for the computation of patient-specific parametrizations and predictions of
individual treatment success rates.}
\end{center}
\end{figure}

\textbf{Approach for scenario (B):}
\\
If the patients are diagnosed with diseases $s_1,\dots,s_L$ and the number of
patients $M_l$ is large for each disease $s_l$, then this extra information
can be used by applying the procedure described in (A) to each subset
$\mathcal Z^{(l)}$ separately in order to obtain more precise results.

\newpage

\section{Toy Example}
\label{section:ToyExample}
We will start with an easy to grasp low-dimensional non-identifiable mechanical
example for scenario (A), where the patients will be
represented by springs with different stiffness values as possible parameter values. Different stiffness values result in
different system responses when a certain force is applied to the springs,
which represents a treatment of the patients. The example demonstrates how our
algorithm can be applied to predict success rates of such a treatment.

We buy two boxes (representing two diseases) of springs, the
first containing 150 springs with stiffness $K_1 = 15\, N/m$, the
second containing 150 springs with stiffness $K_2 = 30\, N/m$.
The springs are of a low quality and their actual stiffness varies from the
nominal value with a standard deviation of $15\%$ (we assume a normal
distribution for each box).

Once we arrive at home, we realize that the boxes are not labeled and that we
already forgot the values $K_1$ and $K_2$ as well as the standard deviations.
As described above, there are two possible scenarios and we will only treat
scenario (A), since (B) goes analogously:
\begin{enumerate}[(A)]
  \item We mix up the springs by putting all of them into one big box (no
  diagnosis for each spring).
  \item We keep them in the two separate boxes (the springs are diagnosed with
  diseases $s_1$ or $s_2$, depending on the box they come from).
\end{enumerate} 
In order to determine the stiffness of a single spring, we perform the following
experiment (harmonic oscillator, see Figure \ref{fig:ToyExampleArrangement}):
\begin{itemize}
  \item We fix one end of the spring and put a mass $m=700\mathrm{g}$ to the
  other end.
  \item After compressing it by 10cm, we let it swing. Applying Hooke's law this results in
  the following ODE
  \begin{equation}
  \label{equ:federODE}
  x''(t) = -\frac{K}{m}\, x(t),\qquad x(0)=-10\mathrm{cm}.
  \end{equation}
  \item  We measure its amplitude at time $t^\ast = 1$s. Therefore, the model
  $\phi:\R\to\R$ and the measurement $Z\in\R$ are given by
  \[
  \phi(K) = x(t^\ast),\qquad Z = \phi(K) + E,
  \] 
  where the measurement error $E\sim\rho_E = \cN(0,\sigma_E^2)$ is assumed to be
  standard normal distributed with mean $0$ and a standard deviation of
  $\sigma_E = 1$cm.    
\end{itemize}

\begin{figure} 
\begin{center}
\begin{tikzpicture}
[scale=0.8, transform shape]
\node[circle,fill=blue,inner sep=2.2mm,opacity=0.2] (c) at (5.2,0) {};
\draw[decoration={aspect=0.3, segment length=13, amplitude=6,coil},decorate,color=black!20] (-8,0) -- (c);

\node[circle,fill=blue!40,inner sep=2.2mm] (a) at (0,0) {};
\draw[decoration={aspect=0.3, segment length=6, amplitude=6,coil},decorate,opacity=0.4] (-8,0) -- (a);

\node[circle,fill=blue,inner sep=2.2mm] (b) at (-4,0) {};
\draw[decoration={aspect=0.3, segment length=3, amplitude=6,coil},decorate] (-8,0) -- (b);

\fill [pattern = north east lines] (-8,-0.55) rectangle (6.5,-0.35);
\draw[thick] (6.5,-0.35) -- (-8,-0.35) -- (-8,0.7);
\fill [pattern = north east lines] (-8,-0.55) rectangle (-8.2,0.7);

\node at (-3.44,0.9) {\small{$x_0=-10$}};
\draw[dashed] (-4,-0.35) -- (-4,0.75);
\node at (0.56,0.9) {\small{$x_{\mathrm{rest}}=0$}};
\draw[dashed] (0,-0.35) -- (0,0.75);
\node at (5.76,0.9) {\small{$x_{\mathrm{Bell}}=13$}};
\draw[dashed] (5.2,-0.35) -- (5.2,0.75);

\draw (6,-0.1) ellipse (0.5 and 0.15);
\draw[fill=black] (5.8,-0.06) circle (0.1);
\draw[rounded corners=4pt] (5.5,-0.1) -- (5.7,0.2) -- (5.7,0.5) -- (6,0.7) -- (6.3,0.5) -- (6.3,0.2) -- (6.5,-0.1);
\end{tikzpicture}
\caption{Experimental arrangement for the toy example described above.}
\label{fig:ToyExampleArrangement}
\end{center}
\end{figure}
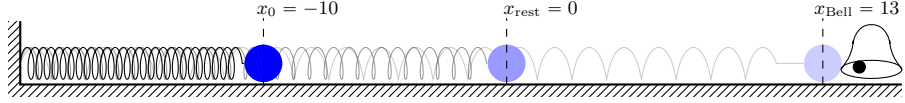

We implemented the fixed point iterations for NPMLE, DS-MLE and MPLE discussed
in Section \ref{section:Theory} (for the latter $\gamma=49$ appeared adequate), starting with a ``non-informative'' prior
$\pi_0$, which we chose as the uniform distribution on
$[1,50]$ (in $kg/s^2$), as well as the computation of the corresponding
posteriors for one of the springs.
The results are shown in Figure \ref{fig:fixedPointIterationToyPriors}.
Since the model is highly non-identifiable (for each point
$x^\ast\in[-10,10]$ there are several stiffness values $K$ for which the
trajectory of \eqref{equ:federODE} goes through $(t^\ast,x^\ast)$, i.e.
$\phi(K) = x^\ast$), the prior estimates are rather non-informative. Therefore,
the effect on the posterior is barely visible, while the smoothing of the
prior is evident.

\begin{figure}
\centering
\includegraphics[width=1\textwidth]{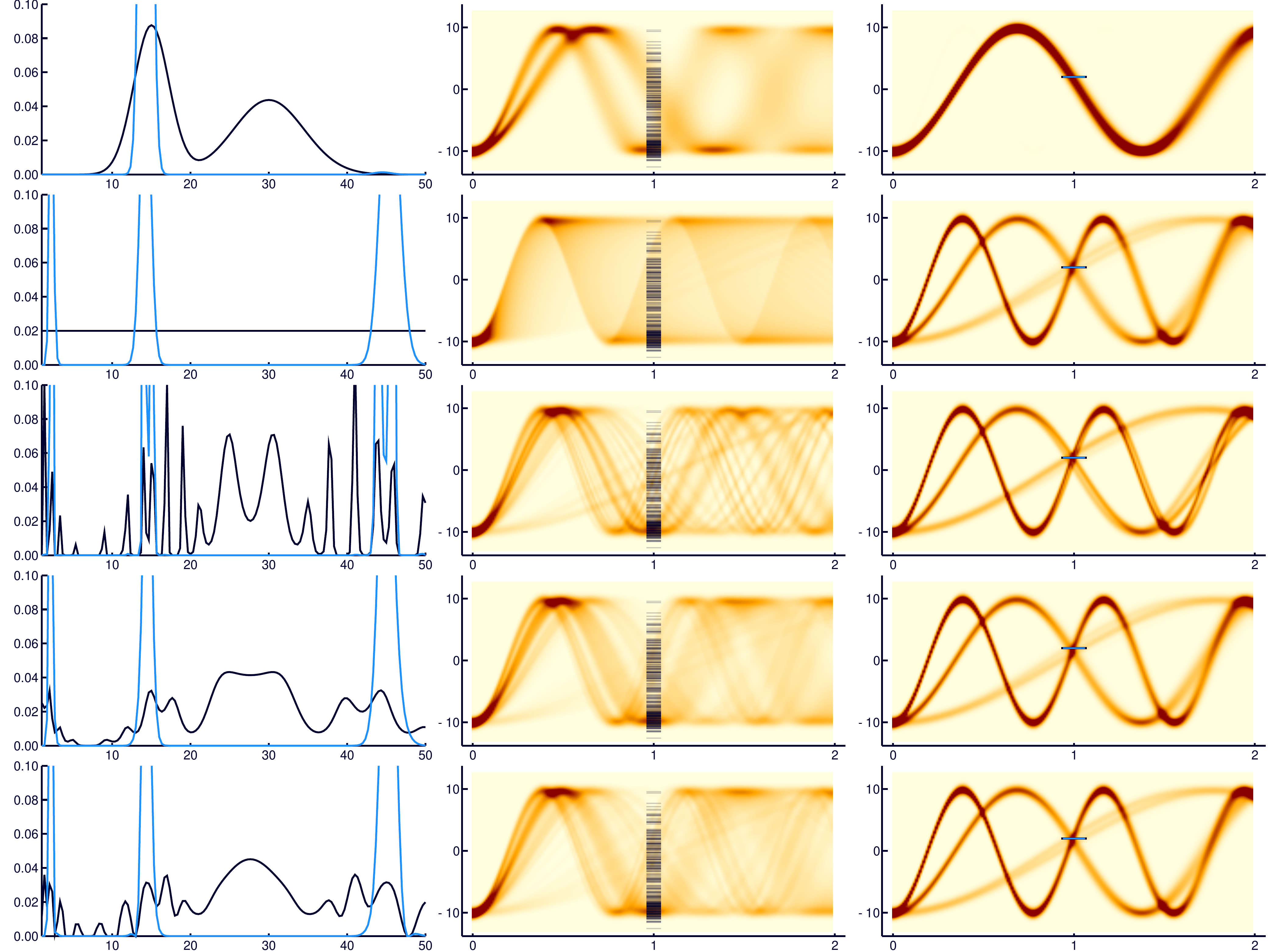}
\caption{From top to bottom: priors $\pi_\true$, $\pi_0$,
$\pi_\NPMLE$, $\pi_\DSMLE$, $\pi_\MPLE$.
\\
From left to right: the prior (black) and the corresponding posterior for one specific spring (blue), distribution of trajectories sampled from that prior plotted over time and all
measurements, measurement of that spring and corresponding posterior distribution of trajectories.
\\
500 iterations have been performed. Note that for
a higher number of iterations $\pi_\NPMLE$ and $\pi_\DSMLE$ (due to the
discretization method discussed in Section \ref{section:TheoryNumerical}) will
not stop peaking.
}
\label{fig:fixedPointIterationToyPriors}
\end{figure}

The ``treatment procedure'' will be modeled by hitting the mass in positive
$x$-direction with several pulses at certain times given by the force $F(t)$ plotted in Figure
\ref{fig:Force},
which results in the following perturbed ODE:
\[
x''(t) = -\frac{K}{m}x(t) + F(t),\qquad x(0)=-10\mathrm{cm}.
\]
\begin{figure}
        \centering
        \includegraphics[width=1\textwidth]{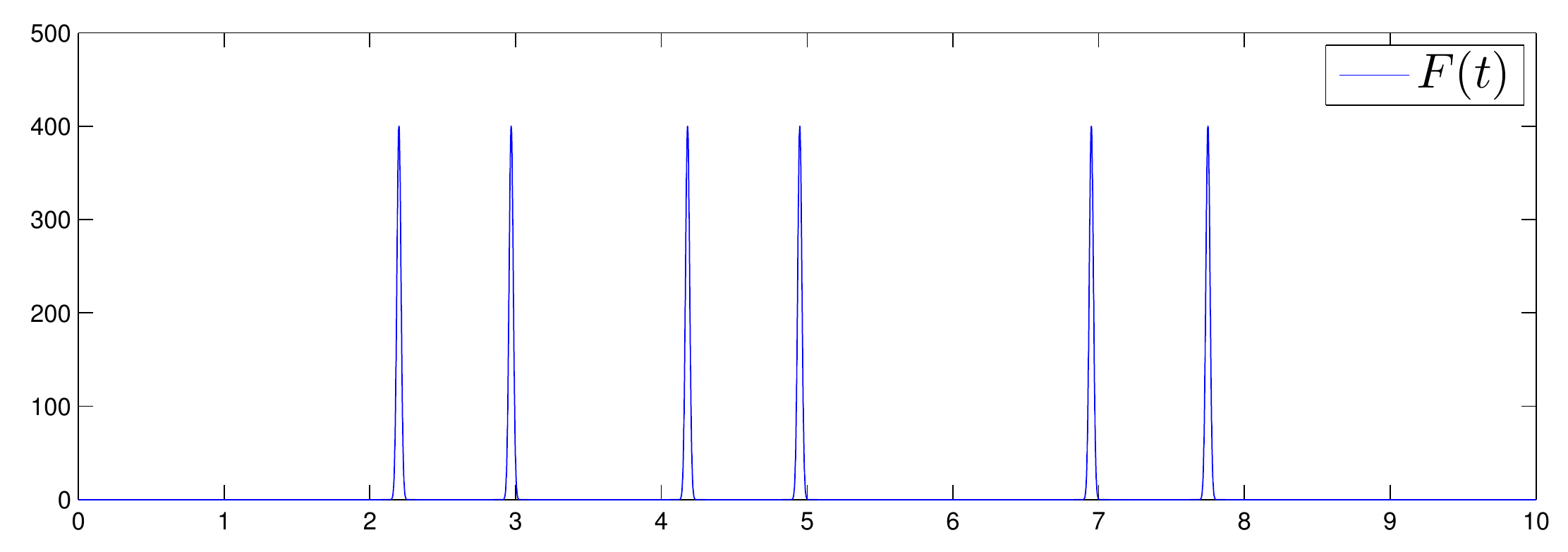}
        \caption{The force $F$ used for modeling the treatment procedure
        plotted over time $t$.}
	\label{fig:Force}
\end{figure}
The treatment will be considered successful, if the mass hits the bell located
at $x_{\mathrm{Bell}} = 13$cm within ten seconds.

For each $\Xi\in\{\true,0,\NPMLE,\DSMLE,\MPLE\}$, we computed the success rates
$R_m^\Xi$, $m=1,\dots,M$ with respect to $\pi_\Xi$ and their (empirical)
standard deviations from the true success rates:
\[
\sigma^\Xi_R = \sqrt{\frac{1}{M-1}\sum_{m=1}^M \left|R_m^\Xi -
R_m^\true\right|^2}.
\]
The results are given in Table \ref{table:successrateSTD}.

\begin{table}[H]
\centering
\begin{tabular}{|C{1.5cm}||C{1.5cm}|C{1.5cm}|C{1.5cm}|C{1.5cm}|}
\hline
\vspace{0.1cm}$\Xi$ \vspace{0.1cm} & $0$ & $\NPMLE$ & $\DSMLE$ & $\MPLE$
\\
\hline
\vspace{0.1cm}$\sigma^\Xi_R$\vspace{0.1cm} & 0.126 & 0.135 & 0.106 & 0.097\\
\hline
\end{tabular}
\caption{The standard deviations of the predicted success rates for different
prior distributions $\pi_\Xi$.}
\label{table:successrateSTD}
\end{table}

\subsection{Example of an unsmooth reference prior}
\label{section:goWrong}
By the following example, we want to make the reader aware of the fact that regularizing by means of the mutual information $\cI[X;Z]$ not always results in a smooth prior. Even if we ignore the likelihood in \eqref{equ:MPLEformulation} and maximize $\cI[X;Z]$ alone, the resulting prior, which is the reference prior 
\[
\pi_{\rm ref} = \argmax_\pi\ \cI[X;Z](\pi),
\]
can be very irregular.

If, in the above example, we perform two measurements at times $t_1^\ast = 1$s
and $t_2^\ast = 1.7$s instead of only one, the model $\phi:\R\to\R^2$ and the
measurement $Z\in\R^2$ are given by
\begin{align*}
&\phi(K) = (x(t_1),x(t_2))^{\intercal},\qquad Z = \phi(K) + E,
\intertext{with error}
&E = (E_1,E_2)^{\intercal}\, \sim \, \rho_E =
\cN(0,\sigma_E^2)\otimes \cN(0,\sigma_E^2).
\end{align*}
The resulting reference prior shown in Figure \ref{fig:goWrong} is very unsmooth.
Let us analyze the reason for this!
%
\begin{figure}[H]
        \centering
        \includegraphics[width=1\textwidth]{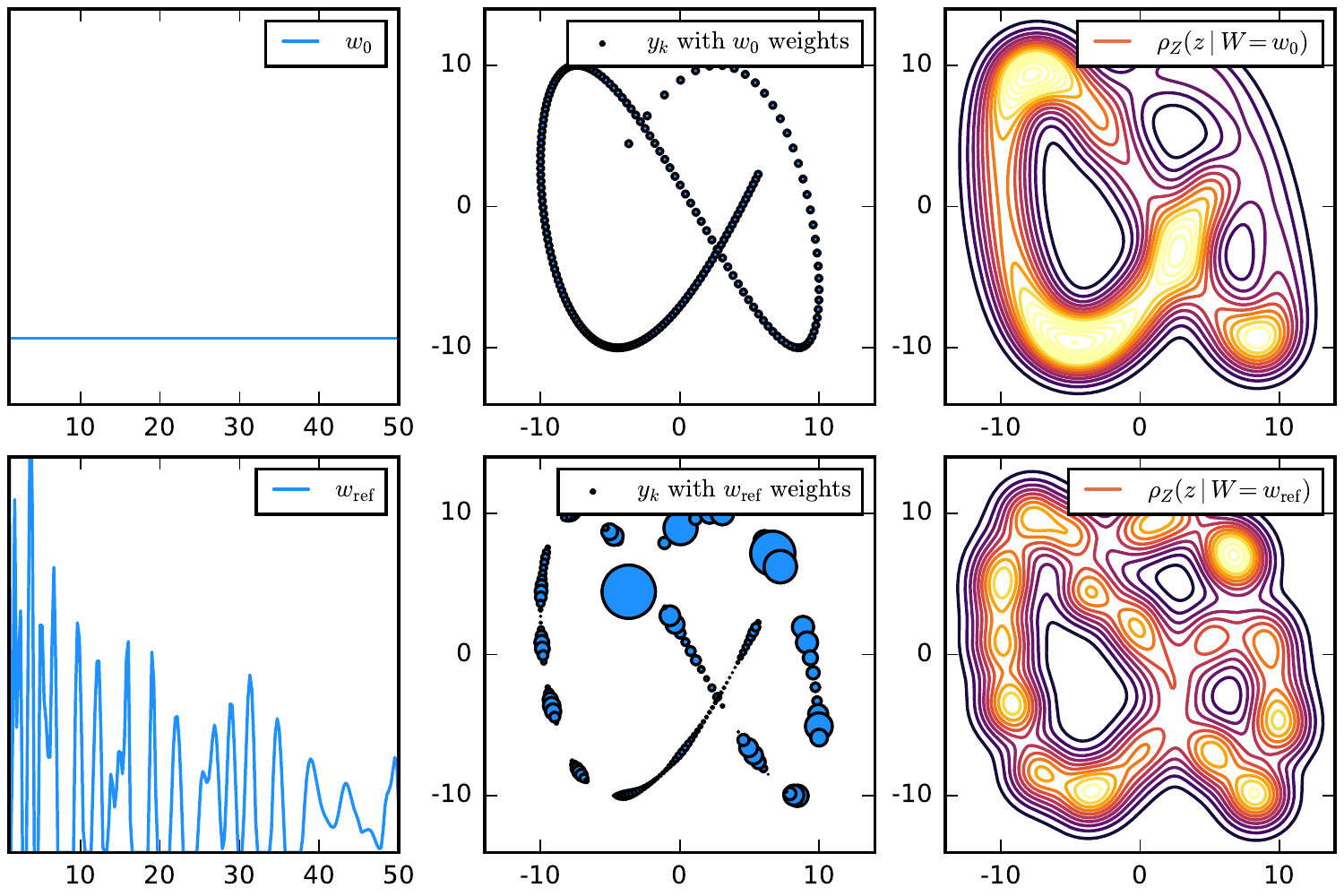}
        \caption{
        From left to right: Two different priors (uniform and
        reference prior), resulting weights in the linear combination
        \eqref{equ:rho_ZLinearCombination}, resulting densities $\rho_Z(z\mid
        W=w)$ in measurement space. One can clearly see that the reference prior
        results in a higher entropy $H_Z(\pi)$ and thereby in a higher mutual
        information $\cI[X;Z](\pi)$, which it was optimized for. However, it is
        very irregular.
        }
	\label{fig:goWrong}
\end{figure}

If we discretize the parameter space $\cX = [1,50]$ by 200 equidistant points $x_k$ and define $y_k := \phi(x_k)$, the density $\rho_Z(z\mid W=w)$ consists
of a linear combination of shifted versions of $\rho_E$,
\begin{equation}
\label{equ:rho_ZLinearCombination}
\rho_Z(z\mid W=w) = \sum_{k=1}^K w_k\, \rho_E(z-y_k),
\end{equation}
in our case a Gaussian mixture with centers $y_k$.
The aim is to choose $w = (w_k)_k$ in such a way that it maximizes the entropy
$H_Z(w)$, i.e. to make $\rho_Z(z\mid W=w)$ rather ``spread'' and ``flat''. 

Therefore, we get huge weights $w_k$ for those $k$, where $\rho_E(z-y_k)$ covers
a large region that cannot be covered by any other $\rho_E(z-y_j)$.
Low weights $w_k$ are chosen small if the corresponding $y_k$ lie in a regions
that are already covered by many $\rho_E(z-y_j)$ (otherwise the density
$\rho_Z(z\mid W=w)$ would get too high around $y_k$). This way, we arrive at
the peaked prior $\pi_{\rm ref}$ shown in Figure \ref{fig:goWrong}.


The unsmoothness of the reference prior $\pi_{\rm ref}$ (and of our MPLE) is not necessarily a downside, but it is worthwhile to be aware of such situations when using these methods.


\newpage
\section{Parameter estimation in a large ODE model}
\label{section:LargeExample}
As model system, we consider a model for the human menstrual cycle, named
GynCycle \cite{Roeblitz2013}. This model is given by a system of 33 ordinary
differential equations (ODEs) and 114 parameters.
It has been calibrated previously with
time-series data of blood concentrations for four hormones from 12 patients
during the unperturbed cycle and during treatment (dose-response experiments).
Using deterministic, local optimization (an error-oriented Gauss-Newton
method), only 63 out of 114 parameters could be identified from the given data.
The remaining parameters kept their values from previous versions of the model.
In the following, we will denote these parameter values as nominal values. The
model is currently used to make patient-specific predictions about the outcome
of treatment strategies in reproductive medicine\footnote{EU project
PAEON-Model Driven Computation of Treatments for Infertility Related
Endocrinological Diseases, project number 600773.}. Hence, quantification of
uncertainty in these predictions is of utmost importance.

We got access to additional measurement values of 36 woman for the four
hormones LH, FSH, E2 and P4 during normal cycles\footnote{Courtesy of Dorothea
Wunder, University Hospital of Lausanne.}. These data are sparse or incomplete
in the sense that measurements were not taken on all cycle days, resulting in
about 15 measurement time points per patient and hormone. Our approach,
however, is flexible enough to handle such a data situation.

Based on these data, our aim is to estimate the prior distribution for 82 out
of the 114 parameters $\Theta$ (Hill exponents have been fixed), denoted by
$\pi_0^\Theta$, as well as for the initial values $Y_0$ of the 33 model
species, $\pi_0^{Y_0}$, resulting in a total of 115 dimensions: $X =
(Y_0,\Theta)\in\R^{115}$.

As an initial guess for the prior $\pi_0^\Theta$ of model parameters, we have
chosen uniform distributions on the intervals between zero and five times the
nominal parameter values as non-informative approach.
For $\pi_0^{Y_0}$, we chose a mixture distribution of independent normals around
daily values of a reference cycle computed with the nominal parameters, i.e.
\[
\pi_0^{Y_0} = \frac{1}{31} \sum_{t=0}^{30} G\left[y_{\mathrm{ref}}(t), \Sigma \right],
\]
where $G[m,C]$ denotes the Gaussian density with mean $m$ and covariance
matrix $C$, $y_{\mathrm{ref}} : \mathbb{R}\rightarrow\mathbb{R}^{33}$ is the
reference solution over one menstrual cycle and $\Sigma$ is a diagonal matrix
consisting of the squared standard deviations of each species, respectively,
\[
\Sigma = \mathrm{diag}(\sigma_1^1,\dots,\sigma_{33}^2),
\qquad
\sigma_j^2 = \frac{1}{30}\sum_{t=0}^{30} |y_{\mathrm{ref},j}(t) - \overline y_{\mathrm{ref},j}|^2.
\]
The total prior $\pi_0\colon\R^{115}\to\R$ is build up from $\pi_0^{Y_0}$ and
$\pi_0^{\Theta}$ under the assumption that $Y_0$ and $\Theta$ are independent,
\[
\pi_0(y_0,\theta) := \pi_0^{Y_0}(y_0)\, \pi_0^\Theta(\theta).
\]

The likelihood for specific measurements $z\in\R^{4\times 31}$ is chosen
normally distributed with a (relative) standard deviation of $\sigma = 10\%$,
\[
\rho_Z(z\, |\, X=x)\, \propto\, \exp \left(-\frac{d(\phi(x),z)^2}
{2 \sigma^2} \right),
\]
where
\[
\ 
d(u,v)^2 = \sum_{j=1}^4 \sum_{t=0}^{30} \left|\frac
{u_j(t) - v_j(t)}{c_j}\right|^2
\]
is the relative distance between simulated and measured data, $c_j$ are suitably
chosen constants of the magnitude of the measurements
$\left(v_j(t)\right)_{t=0,\dots,30}$,
\[
\phi(x) = \phi(y_0,\theta) = \left(\mathrm{proj}_4(y(t))\right)_{t=0,\dots,30},
\]
$\mathrm{proj}_4$ denotes the projection onto the four measurable components,
and $y(t)$ is the solution of the GynCycle model with initial values $y_0$ and parameters $\theta$.
\begin{remark}
As mentioned above, the measurements for most women were not taken daily, resulting in incomplete data. In this case, $\phi$ and $d$ have to be chosen separately for each woman, restricting them to measured components. This does not influence the applicability of our algorithm.
\end{remark}

\begin{figure} 
\centering
\includegraphics[width=1\textwidth]{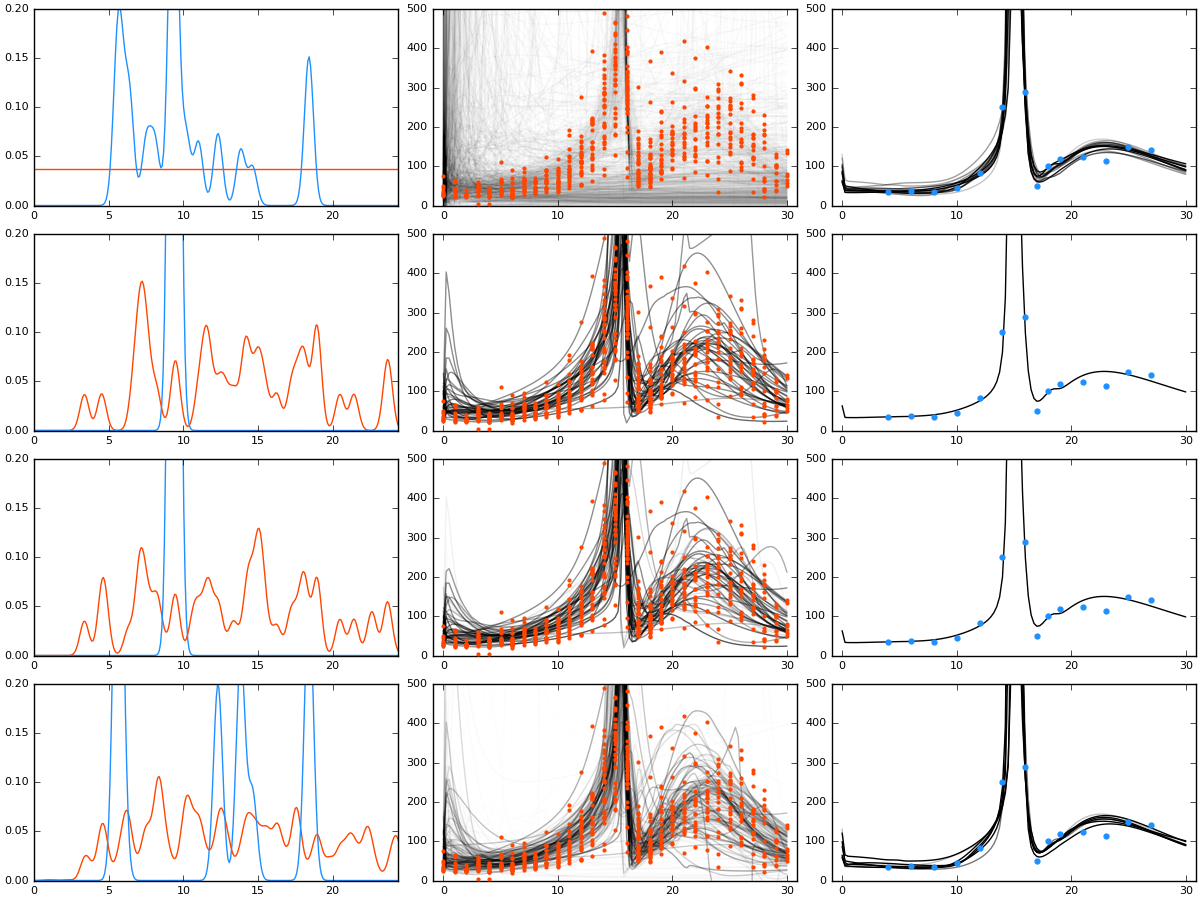}
\caption{From top to bottom:
prior given by $\pi_0$,
$\pi_\NPMLE$, $\pi_\DSMLE$, $\pi_\MPLE$.
\\
From left to right: marginal distribution of the prior (black) for one parameter (the transition rate from the late primary follicle PrA2 to the early secondary follicle SeF1) and the corresponding posterior for one specific patient (blue), trajectories sampled from that prior and all measurements, measurements of that patient and trajectories sampled from the posterior.
\\
300 iterations have been performed. Note that for
a higher number of iterations $\pi_\NPMLE$ and $\pi_\DSMLE$ (due to the
discretization method discussed in Section \ref{section:TheoryNumerical}) will
not stop peaking.}
\label{fig:largeODEexample}
\end{figure}

We sampled the posteriors for each patient (in order to get $\pi_1$-distributed
samples as explained in Section \ref{section:TheoryNumerical}) using the
adaptive mixture Metropolis algorithm by Roberts and Rosenthal
\cite{roberts2009examples}, which is basically a multivariate
Metropolis-Hastings algorithm tuning its Gaussian proposal density for the
current sample based on the covariance of the former ones.
As the computation is independent for each patient, this problem is
well-scalable in the number of patients and we were thus able to compute 10 million samples
for each patient. The Raftery-Lewis diagnostic suggests around 7 million samples
for convergence and the Gelman and Rubin criterion confirms this in our case
with potential scale reduction factors smaller than 1.05.

Once the $\pi_1$-distributed samples had been computed, we implemented the fixed
point iterations for NPMLE, DS-MLE and MPLE discussed
in Section \ref{section:Theory} (for the latter $\gamma=19$ appeared adequate), as well as the corresponding posteriors for one
of the springs. The results are shown in Figure \ref{fig:largeODEexample}.
From the plots it seems counterintuitive that the posterior $p_{\pi_0}^{z_m}$
stemming from the prior $\pi_0$ is so much less informative then those coming
from the estimated priors. However, this is less surprising if one keeps in
mind that we only consider \emph{marginal} densities and therefore the priors
$\pi_\NPMLE$, $\pi_\DSMLE$ and $\pi_\MPLE$ can be much more informative
(compared to $\pi_0$) than they look for each marginalization (e.g., they might
be concentrated around a submanifold of $\cX$). This demonstrates the
importance and strength of empirical Bayes methods.

Our next step will be to compute individual success rates for treatments
frequently used in reproductive medicine (modeled by a perturbed ODE)
and to compare the results with clinical outcomes.
%
\section{Conclusion}
\label{section:Conclusion}
We have introduced a method that estimates the prior distribution in the
empirical Bayes framework, when measurements for a large number of individuals are available.
We discussed the issue of convergence in the identifiable case and also what
happens in the non-identifiable case.

A detailed scheme for the numerical realization of the method has been
elaborated, see Sections \ref{section:TheoryNumerical} and
\ref{section:ResultingAlgorithm}. The numerical
approximation to the fixed point iteration has to be applied with caution,
since its convergence properties differ from the ones of the exact iteration,
see the discussion in the introduction and in Section
\ref{section:TheoryNumerical}. A transformation-invariant regularization
approach has been applied in order to deal with this situation.

The method has been applied to a toy example in low dimensions to confirm our
theoretical results, as well as to a high-dimensional real life problem.
As a byproduct, the method can be applied to deconvolve blurred images, as
discussed in Appendix \ref{section:Deconvolution}.

As demonstrated and explained in Section \ref{section:goWrong}, the resulting $\pi_\MPLE$ can be rather unsmooth. As stated by Good and Gaskins in \cite{good1971nonparametric}, ``continuous distributions [\ldots] could have violent small ripples with little effect on the entropy''. They advise against the use of entropy as a roughness penalty in the continuous case. From our point of view, ``small ripples'' in the density are not a criterion for the
exclusion of a distribution.
Not only can the true distribution itself have ripples, but, even if not so, a
distribution with small ripples can be a good approximation to it.
However, if the aim is to get a smooth prior, this approach might be inadequate.
It is worth mentioning that in this case the reference prior encounters the same
problem. A possible solution is given by DS-MLE, which was also discussed and
implemented for both problems.

We wish to extend the application in Section \ref{section:LargeExample} to data
bases of thousands of patients (instead of just 36), which are available by now
at the University Hospitals in Zurich and Basel.

\begin{appendix}
\section{Deconvolution of Blurred Images}
\label{section:Deconvolution}

The fixed point iteration
\eqref{equ:fixedPointIteration}--\eqref{equ:IterationPsi} can readily be applied
for the deconvolution of blurred images with known point spread function
(non-blind deconvolution).

For this, the probability densities $\pi_n$ are viewed as blurred versions of
the true density $\pi_{\rm true} = \rho_X$, whereby $\rho_X$ is smoothed by a
convolution kernel $G_n$:
\[
\pi_{n+1}(x)
=
\Psi\pi_n(x)
=
\int p_{\pi_n}^z(x)\, \rho_Z(z)\, \mathrm dz
=
\int \rho_X(\tilde x)
\underbrace{\int p_{\pi_n}^z(x)\, \rho_Z(z\, |\, X=\tilde x)\, \mathrm dz}_{=: G_{n+1}(x,\tilde x)}
\, \mathrm d\tilde x.
\]
With growing number $n$ the iterates become less smoothed, converging
to $\rho_X$. Therefore, the fixed point iteration
\eqref{equ:fixedPointIteration}--\eqref{equ:IterationPsi} results in a deconvolution process of $\pi_0$ to the original prior $\rho_X$.

\begin{figure} 
        \centering
        \begin{subfigure}[b]{0.32\textwidth}
                \centering
                \includegraphics[width=1\textwidth]{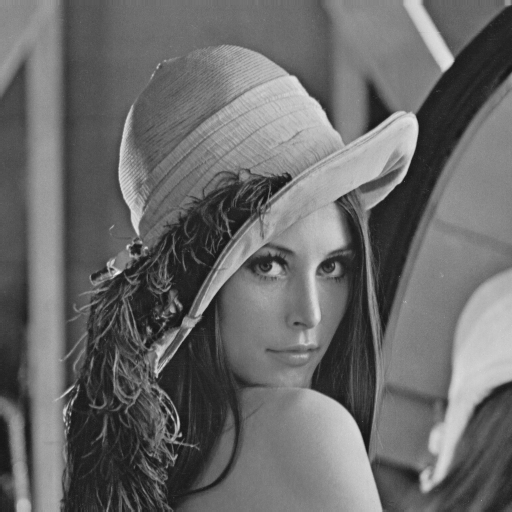}
                \caption{original image}
	\end{subfigure}%
        \hfill
        \begin{subfigure}[b]{0.32\textwidth}
                \centering
                \includegraphics[width=1\textwidth]{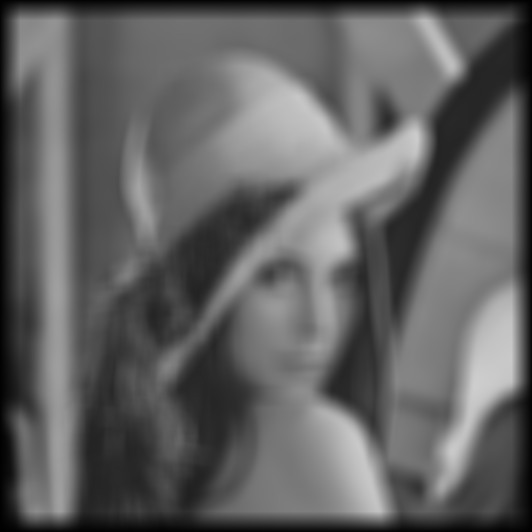}
                \caption{blurred image}
        \end{subfigure}
        \hfill
        \begin{subfigure}[b]{0.32\textwidth}
                \centering
                \includegraphics[width=1\textwidth]{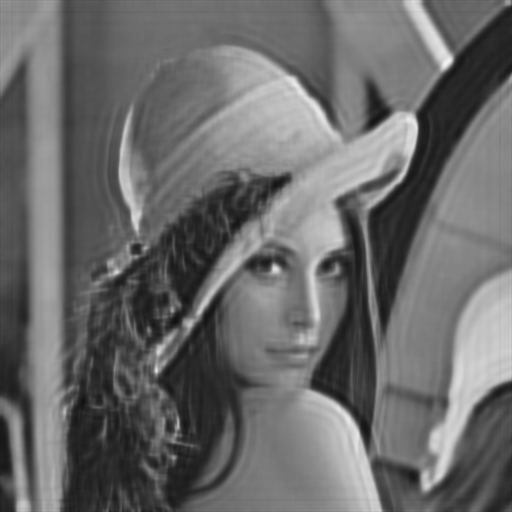}
                \caption{deconvolved, 50 iter.}
        \end{subfigure}%
        \vspace{0.2cm}
        \begin{subfigure}[b]{0.32\textwidth}
                \centering
                \includegraphics[width=1\textwidth]{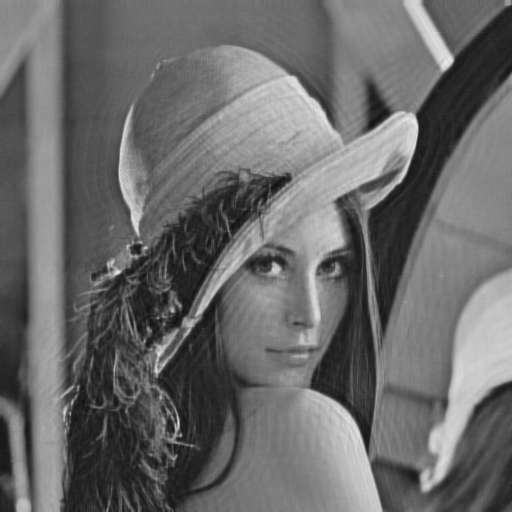}
                \caption{deconvolved, 1500 iter.}
        \end{subfigure}        
        \hfill
        \begin{subfigure}[b]{0.32\textwidth}
                \centering
                \includegraphics[width=1\textwidth]{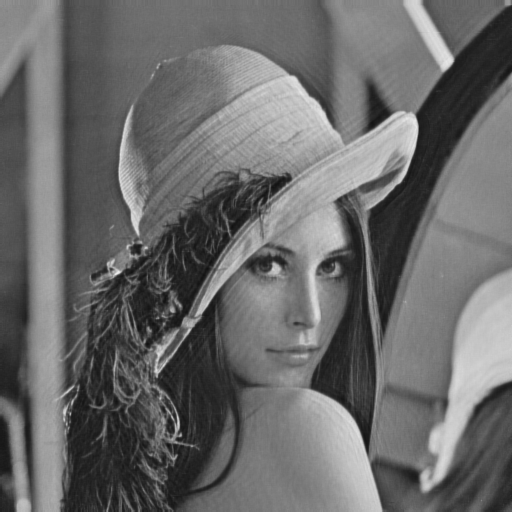}
                \caption{deconvolved, 10000 iter.}
	\end{subfigure}%
        \hfill
        \begin{subfigure}[b]{0.32\textwidth}
                \centering
                \includegraphics[width=1\textwidth]{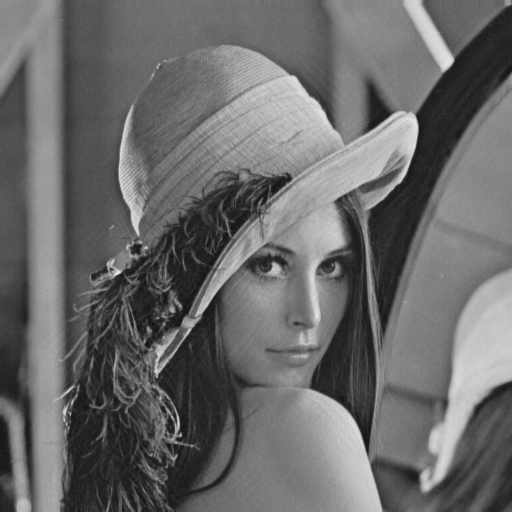}
                \caption{deconvolved, 50000 iter.}
        \end{subfigure}                
        \caption{Deconvolution of an artificially blurred image (b) using the fixed point iteration
        \eqref{equ:fixedPointIteration}--\eqref{equ:IterationPsi}.}
        \label{fig:Deconvolution}
\end{figure}

In fact, if we choose $\phi=\mathrm{Id}$ and the error density $\rho_E$ as
the point spread function, we can view $\rho_X:\R^2\to\R_{\ge 0}$  as the
original image (without loss of generality, we can assume that it is normalized
and given by gray scale values) and the evidence
\[
\rho_Z(z) = \int \rho_X(x)\, \rho_E(z-\phi(x))\, \mathrm dx = (\rho_X\ast\rho_E)(z)
\]
as the blurred image.
In this setup, our algorithm provides a method for the restoration of the original image from the blurred image,
as demonstrated in Figure \ref{fig:Deconvolution}.

\end{appendix}

\bibliographystyle{plain} 
\bibliography{myBibliography}

\end{document}